\documentclass[conference,letterpaper,romanappendices]{ieeeconf}
\let\proof\relax   

\usepackage{amsthm,xpatch}
\usepackage{amsmath,amsfonts}
\usepackage{cite}
\usepackage{amssymb}
\usepackage{dsfont}
\usepackage{graphicx, subfigure}
\usepackage{color}
\usepackage{breqn}
\usepackage{mathtools}
\usepackage{bbm}
\usepackage{latexsym}
\usepackage[ruled, linesnumbered]{algorithm2e}
\usepackage{accents}
\usepackage{tikz}

\newtheorem{lemma}{Lemma}
\newtheorem{theorem}{Theorem}

\newcommand*{\transpose}{%
  {\mathpalette\@transpose{}}%
}

\IEEEoverridecommandlockouts

\begin{document}

\newcommand{\SB}[3]{
\sum_{#2 \in #1}\biggl|\overline{X}_{#2}\biggr| #3
\biggl|\bigcap_{#2 \notin #1}\overline{X}_{#2}\biggr|
}

\newcommand{\Mod}[1]{\ (\textup{mod}\ #1)}

\newcommand{\overbar}[1]{\mkern 0mu\overline{\mkern-0mu#1\mkern-8.5mu}\mkern 6mu}

\makeatletter
\newcommand*\nss[3]{%
  \begingroup
  \setbox0\hbox{$\m@th\scriptstyle\cramped{#2}$}%
  \setbox2\hbox{$\m@th\scriptstyle#3$}%
  \dimen@=\fontdimen8\textfont3
  \multiply\dimen@ by 4             
  \advance \dimen@ by \ht0
  \advance \dimen@ by -\fontdimen17\textfont2
  \@tempdima=\fontdimen5\textfont2  
  \multiply\@tempdima by 4
  \divide  \@tempdima by 5          
  \ifdim\dimen@<\@tempdima
    \ht0=0pt                        
    \@tempdima=\fontdimen5\textfont2
    \divide\@tempdima by 4          
    \advance \dimen@ by -\@tempdima 
    \ifdim\dimen@>0pt
      \@tempdima=\dp2
      \advance\@tempdima by \dimen@
      \dp2=\@tempdima
    \fi
  \fi
  #1_{\box0}^{\box2}%
  \endgroup
  }
\makeatother

\makeatletter
\renewenvironment{proof}[1][\proofname]{\par
  \pushQED{\qed}%
  \normalfont \topsep6\p@\@plus6\p@\relax
  \trivlist
  \item[\hskip\labelsep
        \itshape
    #1\@addpunct{:}]\ignorespaces
}{%
  \popQED\endtrivlist\@endpefalse
}
\makeatother

\makeatletter
\newsavebox\myboxA
\newsavebox\myboxB
\newlength\mylenA

\newcommand*\xoverline[2][0.75]{%
    \sbox{\myboxA}{$\m@th#2$}%
    \setbox\myboxB\null
    \ht\myboxB=\ht\myboxA%
    \dp\myboxB=\dp\myboxA%
    \wd\myboxB=#1\wd\myboxA
    \sbox\myboxB{$\m@th\overline{\copy\myboxB}$}
    \setlength\mylenA{\the\wd\myboxA}
    \addtolength\mylenA{-\the\wd\myboxB}%
    \ifdim\wd\myboxB<\wd\myboxA%
       \rlap{\hskip 0.5\mylenA\usebox\myboxB}{\usebox\myboxA}%
    \else
        \hskip -0.5\mylenA\rlap{\usebox\myboxA}{\hskip 0.5\mylenA\usebox\myboxB}%
    \fi}
\makeatother

\xpatchcmd{\proof}{\hskip\labelsep}{\hskip3.75\labelsep}{}{}

\pagestyle{plain}

\title{\fontsize{21}{28}\selectfont Capacity of Single-Server Single-Message Private Information Retrieval with Coded Side Information}

\author{Anoosheh Heidarzadeh, Fatemeh Kazemi, and Alex Sprintson\thanks{The authors are with the Department of Electrical and Computer Engineering, Texas A\&M University, College Station, TX 77843 USA (E-mail: \{anoosheh, fatemeh.kazemi, spalex\}@tamu.edu).}}


\maketitle 

\thispagestyle{plain}

\begin{abstract} 
This paper considers the problem of single-server single-message private information retrieval with coded side information (PIR-CSI). In this problem, there is a server storing a database, and a user which knows a linear combination of a subset of messages in the database as a side information. The number of messages contributing to the side information is known to the server, but the indices and the coefficients of these messages are unknown to the server. The user wishes to download a message from the server privately, i.e., without revealing which message it is requesting, while minimizing the download cost. In this work, we consider two different settings for the PIR-CSI problem depending on the demanded message being or not being one of the messages contributing to the side information. For each setting, we prove an upper bound on the maximum download rate as a function of the size of the database and the size of the side information, and propose a protocol that achieves the rate upper-bound.
\end{abstract}

\section{introduction}
In the original setting of the private information retrieval (PIR) problem~\cite{Chor:PIR1995}, a user wishes to download (with minimum cost) a message belonging to a database of $K$ messages privately, i.e., without revealing which message it is requesting, from a single server or multiple servers each storing a copy of the database. In a single-server setting or a multiple-server setting when the servers collude, in order to achieve privacy in an information-theoretic sense, the user must download the whole database~\cite{Chor:PIR1995}. However, when the database is replicated on multiple non-colluding servers (see, e.g.,~\cite{Sun2017, JafarPIR3}), or coded versions of the data are stored on the servers (e.g., see~\cite{BU18, tajeddine2017private2}), or the user has some side information about the database (see, e.g.,~\cite{Kadhe2017,Tandon2017,Wei2017FundamentalLO,Wei2017CacheAidedPI,Chen2017TheCO}), the user can achieve the information-theoretic privacy more efficiently than downloading the whole database. The multi-message setting of PIR problem has also been studied in~\cite{BU17,Maddah2018}, where the user wishes to download multiple messages privately, instead of only one message as in the single-message setting, from a single server or multiple servers.  

In this paper, we study the single-server single-message PIR problem when the user knows a linear combination of a subset of $M$ messages in the database as a side information. This problem generalizes those previously studied in the single-server single-message PIR setting. In particular, we assume that the indices and coefficients of the messages contributing to the user's side information are unknown to the server, and the user's demanded message may or may not be one of the messages in the side information. This type of side information can be motivated by several scenarios. For example, the user may have overheard some coded packets over a wireless broadcast channel, or some part of the user's information, which is locally stored using an erasure code, may be lost and not recoverable locally. 

\subsection{Main Contributions}
In this work, we characterize the capacity of the PIR-CSI problem, defined as the supremum of all achievable
download rates, in a single-server single-message setting as a function of the size of the database ($K$) and the size of the side information ($M$). In particular, for the setting in which the user's demand is not one of the messages contributing to its side information, we prove that the capacity is equal to $\lceil\frac{K}{M+1}\rceil^{-1}$ for any $0\leq M< K$. Interestingly, the capacity of PIR with (uncoded) side information~\cite{Kadhe2017} is also equal to $\lceil\frac{K}{M+1}\rceil^{-1}$ where $M$ is the number of messages available at the user. This shows that there will be no loss in capacity, when compared to the case that the user knows $M$ messages separately, even if the user knows only \emph{one} linear combination of $M$ messages. Also, for the setting in which the demanded message is contributing to the user's side information, we prove that the capacity is equal to~$1$ for $M=2$ and $M=K$, and is equal to~$\frac{1}{2}$ for any ${3\leq M\leq K-1}$. This is interesting because it shows that, no matter what the size of the side information is, the user can privately retrieve any message contributing to its side information with a download cost at most twice the cost of downloading the message directly. The proof of converse for each setting is based on information-theoretic arguments, and for the achievability proofs, for different cases of $M$, we propose different PIR protocols which are all based on the idea of randomized non-uniform partitioning.

\section{Problem Formulation}\label{sec:SN}
Let $q\geq 3$, $m\geq 1$, $K\geq 1$, and $0\leq M\leq K$ be integers. Let $\mathbb{F}_q$ be a finite field of size $q$, and let $\mathbb{F}_{q^m}$ be an extension field of $\mathbb{F}_q$ of size $q^m$. Let $\mathbb{F}^{\times}_q$ be the multiplicative group of $\mathbb{F}_q$, i.e., $\mathbb{F}_q^{\times} \triangleq \mathbb{F}_q\setminus \{0\}$. For a positive integer $i$, denote $[i]\triangleq\{1,\dots,i\}$, and let $[0]\triangleq\emptyset$. 

We assume that there is a server storing a set $X$ of $K$ messages $X_1,\dots,X_K$, with each message $X_i$ being independently and uniformly distributed over $\mathbb{F}_{q^m}$, i.e., ${H(X_1) = \dots = H(X_K) = L}$ and $H(X_1,\dots,X_K) = KL$, where $L \triangleq m\log_2 q$. Also, we assume that there is a user that wishes to retrieve a message $X_W$ from the server for some $W\in [K]$, and knows a linear combination ${Y^{[S,C]}\triangleq \sum_{i\in S} c_i X_i}$ for some $S \triangleq \{i_1,\dots,i_M\}\in \mathcal{S}$ and some ${C \triangleq \{c_{i_1},\dots,c_{i_M}\} \in \mathcal{C}}$, where $\mathcal{S}$ is the set of all subsets of $[K]$ of size $M$, and $\mathcal{C}$ is the set of all ordered sets of size $M$ (i.e., all sequences of length $M$) with elements from $\mathbb{F}^{\times}_q$. We refer to $W$ as the \emph{demand index}, $X_W$ as the \emph{demand}, $Y^{[S,C]}$ as the \emph{side information}, and $M$ as the \emph{side information size}. 


Let $\boldsymbol{S}$, $\boldsymbol{C}$, and $\boldsymbol{W}$ be random variables representing $S$, $C$, and $W$, respectively. Denote the probability mass function (pmf) of $\boldsymbol{S}$ by $p_{\boldsymbol{S}}(.)$, the pmf of $\boldsymbol{C}$ by $p_{\boldsymbol{C}}(.)$, and the conditional pmf of $\boldsymbol{W}$ given $\boldsymbol{S}$ by $p_{\boldsymbol{W}|\boldsymbol{S}}(.|.)$. 

We assume that $\boldsymbol{S}$ is uniformly distributed over $\mathcal{S}$, i.e., 
\[p_{\boldsymbol{S}}(S) = \frac{1}{\binom{K}{M}}, \quad S\in \mathcal{S},\] and $\boldsymbol{C}$ is uniformly distributed over $\mathcal{C}$, i.e., 
\[p_{\boldsymbol{C}}(C) = \frac{1}{(q-1)^{M}}, \quad C\in\mathcal{C}.\] Also, we consider two different models for the conditional pmf of $\boldsymbol{W}$ given $\boldsymbol{S}=S$ as follows: 

\subsubsection*{Model~I} $\boldsymbol{W}$ is uniformly distributed over $[K]\setminus S$, i.e., 
\begin{equation*}
p_{\boldsymbol{W}|\boldsymbol{S}}(W|S) = 
\left\{\begin{array}{ll}
\frac{1}{K-M}, & W\not\in S,\\	
0, & \text{otherwise}.
\end{array}\right.	
\end{equation*} 

\subsubsection*{Model~II} $\boldsymbol{W}$ is uniformly distributed over $S$, i.e., 
\begin{equation*}
p_{\boldsymbol{W}|\boldsymbol{S}}(W|S) = 
\left\{\begin{array}{ll}
\frac{1}{M}, & W\in S,\\	
0, & \text{otherwise};
\end{array}\right.	
\end{equation*} (Note that the model~I is valid for $0\leq M<K$, and the model~II is valid for $0< M\leq K$.) 

Let $I^{[W,S]}$ be an indicator function such that $I^{[W,S]} = 1$ if $W\in S$, and $I^{[W,S]} = 0$ if $W\not\in S$. In the model~I, ${I^{[W,S]} = 0}$, and in the model~II, ${I^{[W,S]}=1}$. 

We assume that $I^{[W,S]}$ is known to the server a priori. We also assume that the server knows the size of $S$ (i.e., $M$) and the pmf's $p_{\boldsymbol{S}}(.)$, $p_{\boldsymbol{C}}(.)$, and $p_{\boldsymbol{W}|\boldsymbol{S}}(.|.)$, whereas the realizations $S$, $C$, and $W$ are unknown to the server a priori.

For any $S$, $C$, and $W$, in order to retrieve $X_W$, the user sends to the server a query $Q^{[W,S,C]}$, which is a (potentially stochastic) function of $W$, $S$, $C$, and $Y$, and is independent of any $Y'=\sum_{i\in S'} c'_i X_i$ where $S'\subseteq [K]$ and $C'=\{c'_1,\dots,c'_{|S'|}\}$, $c'_i\in\mathbb{F}^{\times}_q$ such that $(S',C')\neq (S,C)$, i.e., \[I(Y'; Q^{[W,S,C]})=0.\]

The query $Q^{[W,S,C]}$ must protect the privacy of the user's demand index $W$ from the perspective of the server, i.e., \[\mathbb{P}(\boldsymbol{W}= W'| Q^{[W,S,C]},X_1,\dots,X_K)= \frac{1}{K}\quad W' \in [K].\] This condition is referred to as the \emph{privacy condition}. 

Upon receiving $Q^{[W,S,C]}$, the server sends to the user an answer $A^{[W,S,C]}$, which is a (deterministic) function of the query $Q^{[W,S,C]}$ and the messages in $X$, i.e., \[H(A^{[W,S,C]}| Q^{[W,S,C]},X_1,\dots,X_K) = 0.\] The answer $A^{[W,S,C]}$ along with the side information $Y^{[S,C]}$ must enable the user to retrieve the demand $X_W$, i.e., \[H(X_W| A^{[W,S,C]}, Q^{[W,S,C]}, Y^{[S,C]})=0.\] This condition is referred to as the \emph{recoverability condition}. 

By the privacy and recoverability conditions, it follows that for any $W$, $S$, $C$ and any $W'\in [K]$, there exists $Y^{[S',C']}$ for some $S'\in \mathcal{S}$ and some $C'\in \mathcal{C}$ such that $I^{[W',S']} = I^{[W,S]}$, and \[H(X_{W'}| A^{[W,S,C]}, Q^{[W,S,C]}, Y^{[S',C']}) = 0.\] If there is no $Y^{[S',C']}$ such that $X_{W'}$ is recoverable from $A^{[W,S,C]}$ and $Y^{[S',C']}$, i.e., $H(X_{W'}| A^{[W,S,C]}, Q^{[W,S,C]}, Y^{[S',C']})\neq 0$ for all $S'\in \mathcal{S}$ and all $C'\in \mathcal{C}$ such that $I^{[W',S']} = I^{[W,S]}$, then from the server's perspective, $W'$ cannot be the user's demand index, i.e., $\mathbb{P}(\boldsymbol{W} = W'| Q^{[W,S,C]}) = 0$, and $W$ cannot be private.

For each model (I or~II), the problem is to design a query $Q^{[W,S,C]}$ and an answer $A^{[W,S,C]}$ for any $W$, $S$, and $C$ that satisfy the privacy and recoverability conditions. We refer to this problem as \emph{single-server single-message Private Information Retrieval (PIR) with Coded Side Information (CSI)}, or \emph{PIR-CSI} for short. Specifically, we refer to PIR-CSI under the model~I as the \emph{PIR-CSI--I problem}, and PIR-CSI under the model~II as the \emph{PIR-CSI--II problem}.

A collection of $Q^{[W,S,C]}$ and $A^{[W,S,C]}$ for all $W$, $S$, and $C$ such that $I^{[W,S]}=0$ or $I^{[W,S]}=1$, which satisfy the privacy and recoverability conditions, is referred to as a \emph{PIR-CSI--I protocol} or a \emph{PIR-CSI--II protocol}, respectively. 

The \emph{rate} of a PIR-CSI--I or {PIR-CSI--II} protocol is defined as the ratio of the entropy of a message, i.e., $L$, to the average entropy of the answer, i.e., $H(A^{[\boldsymbol{W},\boldsymbol{S},\boldsymbol{C}]})$, where the average is taken over all $W$, $S$, and $C$ such that $I^{[W,S]} = 0$ or $I^{[W,S]} = 1$, respectively. That is, for a PIR-CSI--I or PIR-CSI--II protocol, $H(A^{[\boldsymbol{W},\boldsymbol{S},\boldsymbol{C}]})$ is given by
\[\sum H(A^{[W,S,C]})p_{\boldsymbol{W}|\boldsymbol{S}}(W|S)p_{\boldsymbol{S}}(S)p_{\boldsymbol{C}}(C),\] where the summation is over all $W$, $S$, and $C$ such that $I^{[W,S]} = 0$ or $I^{[W,S]} = 1$, respectively. 

The \emph{capacity} of PIR-CSI--I or PIR-CSI--II problem, respectively denoted by $C_{\text{\it I}}$ or $C_{\text{\it II}}$, is defined as the supremum of rates over all PIR-CSI--I or PIR-CSI--II protocols, respectively. (The notations $C_{\text{\it I}}$ and $C_{\text{\it II}}$ should not be confused with the notation for set $C$.)

In this work, our goal is to characterize $C_{\text{\it I}}$ and $C_{\text{\it II}}$, and to design a PIR-CSI--I protocol that achieves the capacity $C_{\text{\it I}}$ and a PIR-CSI--II protocol that achieves the capacity $C_{\text{\it II}}$. 

\section{Main Results}
In this section, we present our main results. Theorem~\ref{thm:PIRCSI-I} characterizes the capacity of PIR-CSI--I problem, $C_{\text{\it I}}$, and Theorem~\ref{thm:PIRCSI-II} characterizes the capacity of PIR-CSI--II problem, $C_{\text{\it II}}$, for different values of $K$ and $M$. The proofs of Theorems~\ref{thm:PIRCSI-I} and~\ref{thm:PIRCSI-II} are given in Sections~\ref{sec:PIRCSI-I} and~\ref{sec:PIRCSI-II}, respectively. 

\begin{theorem}\label{thm:PIRCSI-I}
The capacity of PIR-CSI--I problem with $K$ messages and side information size $0\leq M< K$ is given by
\[
C_{\text{\it I}}={\left\lceil \frac{K}{M+1} \right\rceil}^{-1}.
\]
\end{theorem}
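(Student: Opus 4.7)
Let $a := \lceil K/(M+1) \rceil$. I will prove the converse $C_{\text{\it I}} \le a^{-1}$ and the achievability $C_{\text{\it I}} \ge a^{-1}$ separately, and the two together will give the theorem.

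\emph{Converse.} My goal is to show $H(A^{[W,S,C]}) \ge a L$ for every fixed $(W,S,C)$, which upper-bounds the rate by $a^{-1}$. Fixing a query realization $Q^{[W,S,C]}$ so that $A^{[W,S,C]}$ becomes a function of $X_1,\ldots,X_K$, I will inductively build two disjoint subsets $T_s, D_s \subseteq [K]$ with the invariants that $X_{D_s}$ is a deterministic function of $(A^{[W,S,C]}, X_{T_s})$, $|D_s| = s$, and $|T_s| \le s M$. Starting from $T_0 = D_0 = \emptyset$, at step $s$ I pick any $W' \in [K] \setminus (T_s \cup D_s)$; this set is nonempty for $s \le a - 1$ because $|T_s \cup D_s| \le s(M+1) < K$. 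The privacy-and-recoverability consequence stated just after the model definitions produces $(S', C')$ with $W' \notin S'$ such that $X_{W'}$ is recoverable from $A^{[W,S,C]}$ and $Y^{[S',C']}$, hence from $A^{[W,S,C]}$ and $X_{S'}$; setting $T_{s+1} = T_s \cup (S' \setminus D_s)$ and $D_{s+1} = D_s \cup \{W'\}$ preserves the invariants, and disjointness is maintained because $W' \notin S' \cup T_s$ by construction. After $a$ steps, $|T_a| \le a M$, $|D_a| = a$, and $T_a \cap D_a = \emptyset$, so $H(A^{[W,S,C]}, X_{T_a}) \ge H(X_{T_a \cup D_a}) = (|T_a| + a) L$; subtracting $H(X_{T_a}) \le |T_a| L$ yields $H(A^{[W,S,C]}) \ge a L$, and averaging over $(W,S,C)$ completes the converse.

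\emph{Achievability.} The plan is to construct a PIR-CSI--I protocol whose answer consists of $a$ symbols from $\mathbb{F}_{q^m}$, giving rate $a^{-1}$. The protocol is based on a randomized partitioning of $[K]$ into $a$ subsets, one distinguished subset of size $M+1$ equal to $S \cup \{W\}$ and the remaining $a-1$ subsets covering $[K] \setminus (S \cup \{W\})$, with the user querying one random linear combination per subset. For the distinguished subset the coefficient vector is $(\alpha, \alpha c_{i_1}, \ldots, \alpha c_{i_M})$ for a fresh uniform $\alpha \in \mathbb{F}_q^\times$, so the returned symbol equals $\alpha X_W + \alpha Y^{[S,C]}$, which the user decodes to $X_W$ by subtracting $\alpha Y^{[S,C]}$ and scaling by $\alpha^{-1}$; for the other subsets the coefficient vectors are sampled independently and uniformly from $(\mathbb{F}_q^\times)^{|G|}$. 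Privacy rests on two observations: the distribution of the partition is invariant under $W$ because, given $W$, the distinguished subset is a uniformly random $(M+1)$-subset of $[K]$ containing $W$ and the completion of the partition is sampled symmetrically; and the distinguished coefficient vector is marginally uniform on $(\mathbb{F}_q^\times)^{M+1}$ because $\alpha$ is independent of $C$ and the map $(\alpha, C) \mapsto (\alpha, \alpha c_{i_1}, \ldots, \alpha c_{i_M})$ is a bijection onto $(\mathbb{F}_q^\times)^{M+1}$.

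\emph{Main obstacle.} The clean construction above handles $K \equiv 0 \pmod{M+1}$. The hard part will be $K \not\equiv 0 \pmod{M+1}$: any partition of $[K]$ into $a$ subsets must contain at least one subset of size strictly less than $M+1$, and if the distinguished subset is always of size $M+1$ the server can exclude the elements of the smaller subset(s) as candidate demand indices, violating $p(W \mid Q) = 1/K$. Addressing this requires the non-uniform randomization alluded to in the contributions section: the $a-1$ non-distinguished subsets must be drawn from a carefully chosen distribution, possibly with overlaps between subsets or with a judiciously randomized subset-size profile, so that marginally every element of $[K]$ is equally likely to appear in the distinguished subset. I expect this construction to branch into sub-cases keyed on $K \bmod (M+1)$, together with a case-by-case accounting of the induced distribution of $Q$ to verify $p(W \mid Q) = 1/K$ exactly, matching the paper's own statement that different protocols are used for different cases of $M$.
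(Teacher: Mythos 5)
Your converse runs the same induction as the paper's Lemma~1, but with one bookkeeping issue. The final step needs $H(X_{D_a}\mid Q, X_{T_a})=aL$, i.e., $X_{D_a}$ must be jointly independent of $(Q, X_{T_a})$. Disjointness gives $X_{D_a}\perp X_{T_a}$, but $Q$ is permitted to be a stochastic function of the side-information \emph{value} $Y=Y^{[S,C]}$, which depends on $X_S$. If your ``pick any $W'$'' step ever lands a $W'$ inside $S$ while the remainder of $S$ is later absorbed into $T_a$, then given $X_{T_a}$ the quantity $Y$ becomes a nontrivial linear function of $X_{D_a}$, so $I(X_{D_a};Q\mid X_{T_a})$ can be as large as $L$ and your chain only certifies $H(A)\ge(a-1)L$. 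The paper avoids this by anchoring the chain on $W$ itself (so $Y$ is conditioned away at the start) and then always choosing $W_i\notin W\cup S\cup S_1\cup\cdots\cup S_{i-1}$, which keeps every newly introduced message independent of the growing conditioning set. Starting your induction with $W'_0=W$, $S'_0=S$ (forcing $S\subseteq T_1$) repairs the argument; as written, the ``any $W'$'' freedom is not justified.

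The substantive gap is in achievability. Your protocol for $K\equiv 0\pmod{M+1}$ is fine and is exactly the paper's RP protocol specialized to $l=0$, where the partition is exact and no duplicated indices appear. For $K\not\equiv 0\pmod{M+1}$ you correctly diagnose the obstacle (a fixed-size partition lets the server rule out indices in the short block) and correctly guess the remedy (overlaps and a non-uniform randomization), but you stop there and ``expect'' a construction. That construction is the actual content of the achievability proof: the paper introduces $l=(M+1)\lceil K/(M+1)\rceil-K$ overlap slots, fills them by sampling a pair $(s,r)$ — the number of duplicated indices drawn from $S$ versus from $R=[K]\setminus(W\cup S)$, with an additional branch on whether $W$ itself is duplicated — from a carefully designed joint pmf $p_{\boldsymbol{s},\boldsymbol{r}}(s,r)$, chooses a partitioning distribution $P_{n,r}$, and then verifies privacy by a case analysis on whether $W'$ is a duplicated index, reducing to three families of balance equations on the function $f(s,r)$ that the choice of $p_{\boldsymbol{s},\boldsymbol{r}}$ is engineered to satisfy. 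None of that design, nor the privacy verification it enables, appears in your proposal, so the achievability direction — and with it the theorem — is not established for general $K$, $M$.
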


The proof consists of two parts. In the first part, we lower bound the average entropy of the answer, $H(A^{[\boldsymbol{W},\boldsymbol{S},\boldsymbol{C}]})$, or equivalently, upper bound the rate of any PIR-CSI--I protocol. In the second part, we construct a {PIR-CSI--I} protocol which achieves this rate upper-bound. 

\begin{theorem}\label{thm:PIRCSI-II}
The capacity of PIR-CSI--II problem with $K$ messages and side information size $0< M\leq K$ is given by
\begin{equation*}
C_{\text{\it II}}=
\begin{cases}
\infty, &\quad M = 1,\\
1, & \quad M=2,K\\
\frac{1}{2}, & \quad 3 \leq M \leq K-1.
\end{cases}
\end{equation*}
\end{theorem}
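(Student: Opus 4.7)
The plan is to prove Theorem~\ref{thm:PIRCSI-II} by treating the four regimes of $M$ separately, in each case pairing an information-theoretic lower bound on the average answer entropy with an explicit matching protocol. The $M=1$ case is immediate: $S=\{W\}$ forces $Y^{[S,C]}=c_WX_W$, so the user recovers $X_W=c_W^{-1}Y^{[S,C]}$ with no download and $C_{\text{\it II}}=\infty$.

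For the converse in the remaining cases I would first invoke recoverability, namely $H(X_W\mid A^{[W,S,C]},Q^{[W,S,C]},Y^{[S,C]})=0$ together with $H(X_W\mid Y^{[S,C]},S,C)=L$ (since $Y^{[S,C]}$ is a single linear equation in $M\ge 2$ unknowns), to get $H(A^{[W,S,C]})\ge L$; this already establishes $C_{\text{\it II}}\le 1$ for $M\in\{2,K\}$. To strengthen this to $H(A^{[W,S,C]})\ge 2L$ when $3\le M\le K-1$, I would apply the privacy-based implication spelled out after the privacy condition in the problem formulation. Fix $(W,S,C)$ and pick any $W'\in[K]\setminus S$, which is available since $M<K$; privacy forces the existence of some $S'\ni W'$ and $C'$ with $X_{W'}$ recoverable from $A^{[W,S,C]}$ and $Y^{[S',C']}$. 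I would take the canonical witness $S'=(S\setminus\{W\})\cup\{W'\}$, which has size $M$ and meets $S$ in $M-1$ indices. Viewing $Y^{[S,C]}$ and $Y^{[S',C']}$ as two linear equations in the $M+1$ unknowns $X_W,X_{W'},(X_i)_{i\in S\setminus\{W\}}$, and using $M\ge 3$ so that the common block has at least two free unknowns, I would verify $H(X_W,X_{W'}\mid Y^{[S,C]},Y^{[S',C']},S,S',C,C')=2L$; combined with the two recoverability relations this forces $H(A^{[W,S,C]})\ge 2L$, hence $C_{\text{\it II}}\le 1/2$.

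For achievability I would present a distinct randomized construction for each non-trivial regime, each an instance of the randomized non-uniform partitioning idea highlighted in the abstract. For $M=K$, draw independent uniform $\lambda,\mu\in\mathbb{F}_q^\times$ subject to $\lambda\neq\mu$, send the coefficient vector $d$ with $d_W=\mu c_W$ and $d_i=\lambda c_i$ for $i\neq W$, and decode $X_W=((\mu-\lambda)c_W)^{-1}(A-\lambda Y^{[S,C]})$; since every coordinate of $d$ is nonzero, a short Bayes computation symmetric in $W$ shows that $\mathbb{P}(Q=d\mid W=W')$ does not depend on $W'$, so privacy holds exactly, while the answer is a single field element, giving rate~$1$. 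For $M=2$ I would use a biased single-index protocol: the user sets $k=W$ with probability $1/K$ and $k=S\setminus\{W\}$ with probability $(K-1)/K$, requests $A=X_k$, and decodes either directly when $k=W$ or via $X_W=c_W^{-1}(Y^{[S,C]}-c_kX_k)$ otherwise; the weights are chosen exactly so that, after marginalizing over the uniform $S$, $\mathbb{P}(k\mid W=W')=1/K$ for every $W'\in[K]$, giving exact privacy at rate~$1$. For $3\le M\le K-1$ I would extend this scheme by first sampling an auxiliary $j\in[K]\setminus S$ uniformly and then using a biased randomization to select between two query shapes, jointly producing a two-scalar answer whose support covers $S\cup\{j\}$ in such a way that for every $W'\in[K]$ the transcript is consistent with some feasible $(S',C')$ with $W'\in S'$, yielding rate~$1/2$.

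The main obstacle I expect is the achievability for $3\le M\le K-1$: the two-scalar answer must simultaneously (i)~permit the user to solve for $X_W$ given $Y^{[S,C]}$, (ii)~be explainable as a query on behalf of every $W'\in[K]$ through an appropriate $(S',C')$ with $I^{[W',S']}=1$, and (iii)~carry a randomization calibrated so that the induced conditional distribution of the transcript given $W=W'$ is identical across $W'$. A secondary subtlety is the $M>K/2$ portion of the converse, where no disjoint $S'$ exists and one must use the specific swap $S'=(S\setminus\{W\})\cup\{W'\}$ above in order to preserve $H(X_W,X_{W'}\mid Y^{[S,C]},Y^{[S',C']},S,S',C,C')=2L$.
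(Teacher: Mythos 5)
Your converse for $3\le M\le K-1$ has a genuine gap that the paper has to work hard to close. You fix $W'\in[K]\setminus S$ and then say you would ``take the canonical witness'' $S'=(S\setminus\{W\})\cup\{W'\}$; but in a converse you do not get to choose $S'$ --- privacy only promises that \emph{some} $(S',C')$ with $W'\in S'$ makes $X_{W'}$ decodable, and the protocol picks which one. Worse, for the capacity-achieving protocol in the paper (case 2, $3\le M\le K/2$), the alibi pair is \emph{exactly} your canonical $S'$, with $C'$ agreeing with $C$ on the common block $S\setminus\{W\}$. In that case $Y-Y'=c_WX_W-c'_{W'}X_{W'}$ is a deterministic linear relation between $X_W$ and $X_{W'}$, so $H(X_W,X_{W'}\mid Y,Y')=L$, not $2L$, and your chain of inequalities only yields $H(A)\ge L$. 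Your heuristic that ``$M\ge 3$ gives at least two free unknowns in the common block'' implicitly assumes $C'$ is not proportional to $C$ on $S\cap S'$, which is precisely the degenerate case the optimal protocol exploits. The paper avoids this by first picking $W'\in S$ (not outside $S$), branching on whether $X_{W'}$ is independent of $(X_W,Y,Y')$, and, in the degenerate branch, introducing a third index $W''\notin S$ and a third side information $Y''$, then re-conditioning on $(Y',Y'')$ rather than on $(Y,Y')$ so that $X_{W'}$ and $X_{W''}$ are provably jointly independent of the conditioning; this two-level case analysis is what actually delivers $H(A)\ge 2L$.

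On the achievability side, your $M=1$, $M=2$, and $M=K$ constructions are essentially the paper's (for $M=K$ you rescale all coordinates by $\lambda$ and the $W$-coordinate by $\mu\neq\lambda$, while the paper only resamples the $W$-coordinate; both give a uniform posterior, so this is a harmless variant). But for $3\le M\le K-1$ you only gesture at ``a biased randomization to select between two query shapes,'' and this is where the real content is. The paper in fact needs \emph{two} distinct constructions here: for $3\le M\le K/2$ the two answer sets have size $M-1$, with $Q_1=S\setminus\{W\}$ and $Q_2$ drawn (with a $\tfrac{2(M-1)}{K}$-vs-$1-\tfrac{2(M-1)}{K}$ bias on whether $W$ is included) from $\{W\}\cup([K]\setminus S)$; for $K/2< M\le K-1$ the sets have size $M$, with $Q_1=S$ and $Q_2$ overlapping $S$ in exactly $2M-K$ indices, again with a calibrated bias and a coefficient perturbation $c\ne c_1$ on the $W$-coordinate of $Q'_1$. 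Verifying privacy then requires checking that $\mathbb{P}(Q\mid \boldsymbol{W}=W')$ is constant across $W'$ both inside and outside the query supports, which your sketch does not attempt. So your proposal is correct in the trivial regimes, is a close variant of the paper for $M=K$, but has a real gap in the converse and leaves the central achievability construction unfinished.
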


For the case of $M=1$, the proof is straightforward. In this case, the user has one (and only one) message in its side information, and it demands the same message. A PIR-CSI--II protocol for this case is to send no query, and receive no answer. Since the (average) entropy of the answer is zero, the rate of this protocol is infinity, and so is the capacity. 

For each of the other cases of $M$, the proof consists of two parts. In the first part, we provide a lower bound on $H(A^{[\boldsymbol{W},\boldsymbol{S},\boldsymbol{C}]})$, or  equivalently, an upper bound on the rate of any PIR-CSI--II protocol, for each case. In the second part, we construct a {PIR-CSI--II} protocol for each case which achieves the corresponding upper-bound on the rate.

\section{The PIR-CSI--I Problem}\label{sec:PIRCSI-I}

\subsection{Proof of Converse for Theorem~\ref{thm:PIRCSI-I}}

\begin{lemma}\label{lem:Converse1}
For $0\leq M<K$, $C_{\text{\it I}}\leq {\lceil \frac{K}{M+1}\rceil}^{-1}$.
\end{lemma}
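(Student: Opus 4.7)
The plan is to lower bound the average answer entropy $H(A^{[\boldsymbol{W},\boldsymbol{S},\boldsymbol{C}]})$ by at least $\lceil K/(M+1)\rceil \cdot L$, by converting the privacy condition into a combinatorial covering statement on a dependency digraph and then using a peeling (chain-rule) argument. Throughout I fix an arbitrary valid PIR-CSI--I protocol and an arbitrary admissible $(W,S,C)$, writing $A=A^{[W,S,C]}$ and $Q=Q^{[W,S,C]}$. The privacy and recoverability conditions, as already derived in the problem formulation, imply that for every $W'\in[K]$ there exist $S'_{W'}\in\mathcal{S}$ and $C'_{W'}\in\mathcal{C}$ with $W'\notin S'_{W'}$ and $H(X_{W'}\mid A,Q,Y^{[S'_{W'},C'_{W'}]})=0$. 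Since $Y^{[S'_{W'},C'_{W'}]}$ is a deterministic function of $\{X_i\}_{i\in S'_{W'}}$, this is equivalent to $H(X_{W'}\mid A,Q,\{X_i\}_{i\in S'_{W'}})=0$, i.e., each $X_{W'}$ is determined by $(A,Q)$ together with the $M$ database symbols indexed by some $S'_{W'}\subseteq[K]\setminus\{W'\}$.

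The heart of the proof is the following combinatorial lemma: any digraph on $[K]$ in which every vertex has out-degree $M$ and no self-loop contains an induced acyclic subgraph of size at least $t\triangleq\lceil K/(M+1)\rceil$. I would prove this by induction on $K$. For the inductive step, pick any vertex $v$, delete $v$ together with its $M$ out-neighbors (at most $M+1$ vertices removed), apply the induction hypothesis to the remaining $\geq K-M-1$ vertices to obtain an acyclic set $U'$ of size $\lceil(K-M-1)/(M+1)\rceil$, and set $U\triangleq U'\cup\{v\}$; since $v$ has no out-neighbor in $U$, no cycle in $U$ passes through $v$, so $U$ remains acyclic, and the arithmetic identity $1+\lceil(K-M-1)/(M+1)\rceil=\lceil K/(M+1)\rceil$ closes the step. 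Applying the lemma to the digraph whose arcs are $W'\to i$ for $i\in S'_{W'}$ produces an acyclic set $U=\{W'_1,\ldots,W'_t\}$ admitting a topological order such that $S'_{W'_i}\subseteq T\cup\{W'_1,\ldots,W'_{i-1}\}$ for every $i$, where $T\triangleq[K]\setminus U$ has size $K-t$.

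Iterating the decoding identity along this order gives $H(\{X_{W'_i}\}_{i=1}^{t}\mid A,Q,\{X_j\}_{j\in T})=0$, whence by the chain rule $H(X_1,\ldots,X_K\mid A,Q)\leq|T|L=(K-t)L$, and therefore $I(X_1,\ldots,X_K;A,Q)\geq tL$. The step I expect to be the most delicate is converting this joint bound into the desired $H(A)\geq tL$: the plan is to exploit the problem's stipulation $I(Y';Q)=0$ for all $(S',C')\neq(S,C)$, which forces $Q$ to depend on the database only through the single coded symbol $Y^{[S,C]}$, and to refine the combinatorial step so that the acyclic set $U$ can be chosen essentially disjoint from $S$ (which is possible because $|[K]\setminus S|=K-M\geq t$). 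Conditioning additionally on $Y^{[S,C]}$ and using that each $X_{W'_i}$ with $W'_i\notin S$ is independent of $(Q,Y^{[S,C]})$ then yields $H(A\mid Q,Y^{[S,C]})\geq tL$, whence $H(A)\geq H(A\mid Q,Y^{[S,C]})\geq tL$. Averaging over admissible $(W,S,C)$ gives $H(A^{[\boldsymbol{W},\boldsymbol{S},\boldsymbol{C}]})\geq tL$, equivalent to $C_{\text{\it I}}\leq\lceil K/(M+1)\rceil^{-1}$.
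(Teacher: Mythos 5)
Your overall plan is sound and genuinely different in presentation from the paper's proof. The paper performs a direct greedy peeling: starting from $H(A)\geq H(A|Q,Y)$, it iteratively picks $W_i\notin\bigcup_{j<i}(W_j\cup S_j)\cup(W\cup S)$, adds $Y_i$ and $X_{W_{i-1}}$ to the conditioning one at a time, and extracts one fresh $H(X_{W_i})=L$ per step. You instead isolate the combinatorial content into a clean lemma about induced acyclic subgraphs of bounded-out-degree digraphs, take a topological order, and then apply the chain rule once. These are really the same underlying idea (your induction ``delete $v$ and its $M$ out-neighbors, recurse'' is exactly the paper's greedy choice), but your modularization is arguably cleaner, and the mutual-information rephrasing $I(X_1,\ldots,X_K;A,Q)\geq tL$ is a nice way to package the decoding step.

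That said, the two places you flag as delicate genuinely need more than the sketch gives. First, the justification ``$U$ can be chosen disjoint from $S$ because $|[K]\setminus S|=K-M\geq t$'' does not work as stated: running your lemma on the digraph restricted to $[K]\setminus S$ only yields an acyclic set of size $\lceil(K-M)/(M+1)\rceil$, which is $t-1$ whenever $K\not\equiv 0\ (\mathrm{mod}\ M+1)$. The correct refinement is to \emph{start} the induction at the vertex $W$ itself, whose out-neighborhood can be taken to be exactly $S$ (recoverability gives $H(X_W\mid A,Q,Y^{[S,C]})=0$); then the first deletion removes precisely $W\cup S$, the recursion lives entirely in $[K]\setminus(W\cup S)$, and the resulting $U\ni W$ is disjoint from $S$ and has size $1+\lceil(K-M-1)/(M+1)\rceil=t$. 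Second, the step ``each $X_{W'_i}$ independent of $(Q,Y)$ yields $H(A\mid Q,Y)\geq tL$'' as phrased is not the actual computation. The clean argument is: since $A$ is a deterministic function of $(Q,X)$, $H(A\mid Q,Y)=H(X\mid Q,Y)-H(X\mid A,Q,Y)$; since $Q$ depends on $X$ only through $Y$, $H(X\mid Q,Y)=H(X\mid Y)=(K-1)L$; and by the topological decoding together with $S\subseteq T$, $H(X\mid A,Q,Y)=H(\{X_j\}_{j\in T}\mid A,Q,Y)\leq H(\{X_j\}_{j\in T}\mid Y)=(K-t-1)L$. Subtracting gives $H(A\mid Q,Y)\geq tL$. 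With these two repairs your route goes through and gives the same bound as the paper.
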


\begin{proof} Suppose that the user wishes to retrieve $X_W$ for a given $W \in [K]$, and it knows $Y = Y^{[W,S]}$ for given $S\in \mathcal{S}$ and $C\in \mathcal{C}$ such that ${I^{[W,S]}=0}$. The user sends to the server a query $Q = Q^{[W,S,C]}$, and the server responds to the user by an answer $A = A^{[W,S,C]}$. We need to show that $H(A^{[\boldsymbol{W},\boldsymbol{S},\boldsymbol{C}]})$ is lower bounded by $\lceil \frac{K}{M+1}\rceil L$. Since $H(A^{[\boldsymbol{W},\boldsymbol{S},\boldsymbol{C}]})$ is the average entropy of the answer, it suffices to show that $H(A)$ is lower bounded by $\lceil \frac{K}{M+1}\rceil L$. The proof proceeds as follows: 
\begin{align}
H(A) &\geq H(A|Q,Y)\nonumber \\ &=H(A,X_W|Q,Y)\label{eq:line1}\\
&=H(X_W|Q,Y) + H(A|Q,X_W,Y) \nonumber\\
&=H(X_W)+H(A|Q,X_W,Y)\label{eq:line2}
\end{align} where~\eqref{eq:line1} holds because $H(A,X_W|Q,Y) = H(A|Q,Y)+H(X_W|A,Q,Y)$, and $H(X_W|A,Q,Y)=0$ (by the recoverability condition); and~\eqref{eq:line2} holds since $X_W$ is independent of $(Q,Y)$ (noting that $W\not\in S$), and $H(X_W|Q,Y) = H(X_W)$. 

Now, we lower bound $H(A|Q,X_W,Y)$. There are two cases: (i) $W \cup S = [K]$, and (ii) $W \cup S\neq [K]$. In the case (i), $M = K-1$, and so, $\lceil \frac{K}{M+1}\rceil L = L$. Since $H(A|Q,X_W,Y)\geq 0$, then $H(A)\geq H(X_W) = L$ (by~\eqref{eq:line2}). 

In the case (ii), we arbitrarily choose a message, say $X_{W_1}$, from the set of remaining messages, i.e., ${W_1 \not\in W \cup S}$. By the privacy and recoverability conditions, there exists $Y_1 = Y^{[S_1,C_1]}$ for some $S_1\in \mathcal{S}$ and some ${C_1\in \mathcal{C}}$ such that $I^{[W_1,S_1]} = 0$ and $H(X_{W_1}|A,Q,Y_1) = 0$. Since conditioning does not increase the entropy, then $H(X_{W_1}|A,Q,X_W,Y,Y_1) = 0$. Thus, \begin{align}
H(A|Q,X_W,Y) &\geq H\big(A|Q,X_W,Y,Y_1\big)\nonumber\\
&= H\big(A,X_{W_1}|Q,X_W,Y,Y_1\big)\label{eq:line3}\\
&=H(X_{W_1}|Q,X_W,Y,Y_1)\nonumber \\ 
& \quad +H(A|Q,X_W,X_{W_1},Y,Y_1)\nonumber\\ & = H(X_{W_1})\nonumber \\ & \quad +H(A|Q,X_W,X_{W_1},Y,Y_1)\label{eq:line4}
\end{align} where~\eqref{eq:line3} holds because $H(A,X_{W_1}|Q,X_W,Y,Y_1) = H(A|Q,X_W,Y,Y_1)+H(X_{W_1}|A,Q,X_W,Y,Y_1)$, and $H(X_{W_1}|A,Q,X_W,Y,Y_1)=0$ (by the assumption); and~\eqref{eq:line4} follows from the independence of $X_{W_1}$ and $(Q,X_W,Y,Y_1)$ (noting that $W_1\not\in W\cup S\cup S_1$), and $H(X_{W_1}|Q,X_W,Y,Y_1) = H(X_{W_1})$. 

Let $n \triangleq  \lceil \frac{K}{M+1}\rceil$. Similarly as above, it can be shown that for all $i\in [n-1]$ there exist $W_1,\dots,W_{i}\in [K]$, ${S_1,\dots,S_{i}\in \mathcal{S}}$, and $C_{1},\dots,C_{i}\in \mathcal{C}$ (and accordingly, $Y_1,\dots,Y_{i}$), where ${W_i\not\in \cup_{j\in [i-1]} (W_{j}\cup S_{j}) \cup (W\cup S)}$, such that $I^{[W_i,S_i]} = 0$, and \[{H(X_{W_{i}}|A,Q,X_W,X_{W_1},\dots,X_{W_{i-1}},Y,Y_1,\dots,Y_{i})=0}.\] Note that $|\cup_{j\in [i]} (W_j \cup S_j) \cup (W\cup S)|\leq (M+1)(i+1)$ for all $i\in [n-1]$. Repeating a similar argument as before, 
\begin{align*}
& H(A|Q,X_{W},X_{W_1},\dots,X_{W_{i-1}},Y,Y_1,\dots,Y_{i-1}) \geq H(X_{W_{i}}) \nonumber \\ &\quad +H(A|Q,X_{W},X_{W_1},\dots,X_{W_{i}},Y,Y_1,\dots,Y_{i})
\end{align*} for all $i\in [n-1]$. Putting these lower bounds together,
\begin{dmath*}
H(A|Q,X_W,Y) \geq \sum_{i=1}^{n-1} H(X_{W_i})+ H(A|Q,X_W,X_{W_1},\dots,X_{W_{n-1}},Y,Y_1,\dots,Y_{n-1}),	
\end{dmath*} and subsequently, 
\begin{dmath}\label{eq:line9}
H(A|Q,X_W,Y) \geq \sum_{i=1}^{n-1} H(X_{W_i}) = (n-1) H(X_W)	
\end{dmath} since $H(X_{W_1})=\dots=H(X_{W_{n-1}})=H(X_W)$. Putting~\eqref{eq:line2} and~\eqref{eq:line9} together, 
\begin{equation*}
H(A)\geq n H(X_W) = \left\lceil\frac{K}{M+1}\right\rceil L,
\end{equation*} as was to be shown.
\end{proof}

\subsection{Proof of Achievability for Theorem \ref{thm:PIRCSI-I}}\label{subsec:AchThm1}
In this section, we propose a PIR-CSI--I protocol for arbitrary $K\geq 1$ and ${0\leq M\leq K-1}$. 

Assume, without loss of generality (w.l.o.g.), that $S = \{1,\dots,M\}$ and $C = \{c_{1},\dots,c_{M}\}$. 

\subsubsection*{Randomized Partitioning (RP) Protocol} The RP protocol consists of four steps as follows: 

\textbf{\it Step 1:} The user constructs $n\triangleq\lceil \frac{K}{M+1}\rceil$ (ordered) sets $Q_1,\dots,Q_n$ of indices in $[K]$, each of size $M+1$, and $n$ (ordered) sets $Q'_1,\dots,Q'_n$ of elements in $\mathbb{F}^{\times}_q$, each of size $M+1$. 

For constructing $Q_1,\dots,Q_n$, ${l \triangleq (M+1)n-K}$ extra indices are required. The procedure of selecting these extra indices is as follows. First, the user randomly chooses two integers $s$ and $r$ according to a joint pmf $p_{\boldsymbol{s},\boldsymbol{r}}(s,r)$ given by 
\begin{equation*}
p_{\boldsymbol{s},\boldsymbol{r}}(s,r) = 
\begin{cases}
\alpha_{n,r}\beta_{s,r}P, & s+r = l-1,\\
2\alpha_{n,r}\beta_{s,r}P, & s+r = l,
\end{cases}
\end{equation*} where $\alpha_{n,r}=1$ for $1\leq n\leq 2$ and $0\leq r\leq l$, and 
\[
\alpha_{n,r}=\frac{((M+1)(n-1)-2r)! ((M+1)!)^2}{((M+1)(n-1))! ((M-r+1)!)^2}
\] 
for $n\geq 3$ and $0\leq r\leq l$; ${\beta_{s,r}={\binom{M}{s}\binom{K-M-1}{r}}/{\binom{M}{l-1}}}$ for all $s$ and $r$ such that ${l-1\leq s+r\leq l}$, and $\beta_{s,r}=0$ otherwise; and $P$ is the (unique) solution of the equation ${\sum_{s,r} p_{\boldsymbol{s},\boldsymbol{r}}(s,r)=1}$ where the sum is over all $s$ and $r$. 

If $s+r = l$, the user randomly selects $s$ indices from $S$ and $r$ indices from $R\triangleq [K]\setminus (W\cup S)$. If $s+r=l-1$, the user selects the index $W$ along with $s$ and $r$ randomly chosen indices from $S$ and $R$, respectively. Denote by $V$ the set of $r$ selected indices from $R$, and by $U$ the set of $l$ selected indices from $W$, $S$, and $R$. Note that the probability of any specific realization of $U$ is given by
\begin{equation*}
\frac{p_{\boldsymbol{s},\boldsymbol{r}}(s,r)}{\binom{M}{s}\binom{K-M-1}{r}}	.
\end{equation*}

%


%


Next, the user creates the set $Q_1 = \{W,1,\dots,M\}$, and assigns all indices in $V$ to the set $Q_2$ (if exists, i.e., $n\geq 2$) and the set $Q_3$ (if exists, i.e., $n\geq 3$). Then, the user assigns $M+1-r$ randomly selected indices from ${U\cup R\setminus V}$ (or respectively, $U\cup R\setminus Q_2$) to $Q_2$ (or respectively, $Q_3$). Next, the user randomly partitions all ${(M+1)(n-1)-2r}$ indices in ${U\cup R\setminus (Q_2\cup Q_3)}$ (if any) into the remaining $n-3$ sets $Q_4,\dots,Q_n$ (if exist, i.e., $n\geq 4$), each of size $M+1$. Note that the probability of a specific realization of such a partitioning is given by
\begin{equation*}
P_{n,r} \triangleq
\begin{cases}
\frac{2 (n-3)!((M-r+1)!)^2 ((M+1)!)^{n-3}}{((M+1)(n-1)-2r)!}, & \hspace{-0.1cm} n\geq 3, 0\leq r\leq l,\\
1,& \hspace{-0.1cm} n <3, 0\leq r\leq l.
\end{cases} 
\end{equation*}

For constructing $Q'_1,\dots, Q'_n$, the user creates the set $Q'_1=\{c,c_{1},\dots,c_{M}\}$ where $c$ is chosen from $\mathbb{F}^{\times}_q$ at random, and it creates each of the sets $Q'_2,\dots,Q'_n$ by randomly choosing $M+1$ elements from $\mathbb{F}^{\times}_q$. 

\textbf{\it Step 2:} The user randomly rearranges the elements of each set $Q_i$ and $Q'_i$, and constructs $Q^{*}_i = (Q_i,Q'_i)$ for all $i\in [n]$. The user then reorders $Q^{*}_1,\dots,Q^{*}_n$ by a randomly chosen permutation $\sigma: [n]\mapsto [n]$, and sends to the server the query $Q^{[W,S,C]} = \{Q^{*}_{\sigma^{-1}(1)},\dots,Q^{*}_{\sigma^{-1}(n)}\}$.

\textbf{\it Step 3:} By using $Q^{*}_i=(Q_i,Q'_i)$, the server computes $A_{i} = \sum_{j=1}^{M+1} c_{i_j} X_{i_j}$ for all $i\in [n]$ where $Q_{i} = \{i_1,\dots,i_{M+1}\}$ and $Q'_{i} = \{c_{i_1},\dots,c_{i_{M+1}}\}$, and it sends to the user the answer $A^{[W,S,C]}=\{A_{\sigma^{-1}(1)},\dots,A_{\sigma^{-1}(n)}\}$.

\textbf{\it Step 4:} Upon receiving the answer from the server, the user retrieves $X_W$ by subtracting off the contribution of its side information $Y^{[S,C]}$ from $A_{\sigma(1)}=cX_W+\sum_{i=1}^{M} c_{i}X_{i}$.

\begin{lemma}\label{lem:Ach1}
The RP protocol is a PIR-CSI--I protocol, and achieves the rate ${\lceil \frac{K}{M+1} \rceil}^{-1}$.
\end{lemma}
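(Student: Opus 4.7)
The plan is to verify three things about the RP protocol: recoverability of $X_W$, privacy of $\boldsymbol{W}$, and the claimed rate. Recoverability and rate are direct. Since $Q_1 = \{W, 1, \ldots, M\}$ is paired with coefficients $\{c, c_1, \ldots, c_M\}$, the corresponding answer component equals $cX_W + Y^{[S,C]}$, and the user, knowing $\sigma$, $c$, and $Y^{[S,C]}$, identifies this component within the received answer and solves for $X_W$. For the rate, each of the $n = \lceil K/(M+1)\rceil$ answer components is an element of $\mathbb{F}_{q^m}$, so $H(A^{[\boldsymbol{W},\boldsymbol{S},\boldsymbol{C}]}) \leq nL$, giving rate at least $1/n$; combined with the converse in Lemma~\ref{lem:Converse1}, equality holds.

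The substantive step is privacy. Since the protocol's internal randomness is independent of the messages and the induced prior on $\boldsymbol{W}$ is uniform on $[K]$, the condition $\Pr(\boldsymbol{W}=W' \mid Q^{[\boldsymbol{W},\boldsymbol{S},\boldsymbol{C}]}=q, X_1, \ldots, X_K)=1/K$ reduces via Bayes' rule to showing that
\[
\sum_{S': W' \notin S'}\sum_{C'} \Pr\bigl(Q^{[W',S',C']} = q\bigr)\, p_{\boldsymbol{S}}(S')\, p_{\boldsymbol{C}}(C')
\]
is independent of $W' \in [K]$ for every query realization $q$. The plan is to enumerate, for each candidate $W'$, all internal configurations of the protocol that can yield $q$: a choice of which of the $n$ slots plays the role of $Q^*_1$, a choice of which position within it represents the demand (which fixes the putative $S'$, $C'$, and scalar $c$), a value of $(s,r)$ with a corresponding extras set $U$, the assignment of $V$ to $Q_2$ and $Q_3$ and the fillers of those sets, the partition of the remaining indices among $Q_4, \ldots, Q_n$, the permutation $\sigma$, and the coefficient choices filling $Q'_2, \ldots, Q'_n$. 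Multiplying by $p_{\boldsymbol{S}}(S') p_{\boldsymbol{C}}(C')$ and summing, the claim is that the total depends only on $q$.

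I expect the main obstacle to be the bookkeeping that shows this weighted sum is indeed constant in $W'$. A given index $W'$ may appear either once or twice in $q$ (depending on whether it was picked as an extra), and the number of consistent internal configurations differs between those cases. The formulas for $\alpha_{n,r}$, $\beta_{s,r}$, and $P_{n,r}$ are reverse-engineered precisely so that these combinatorial differences cancel after summation. I plan to organize the argument by grouping the indices appearing in $q$ according to their multiplicity pattern: indices with the same pattern are already matched by probability-preserving bijections induced by the random permutation $\sigma$ and the uniform coefficient $c\in\mathbb{F}_q^\times$, while indices with different patterns are balanced by the algebraic identities implicit in the $\alpha,\beta,P$ weighting together with the normalization $\sum_{s,r} p_{\boldsymbol{s},\boldsymbol{r}}(s,r)=1$. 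Verifying these factorial identities is the delicate technical step I anticipate will carry most of the proof's weight.
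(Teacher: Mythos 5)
Your outline matches the paper's proof in all essentials: reduce privacy to showing $\Pr(Q\mid\boldsymbol{W}=W')$ is constant in $W'$, factor out the (uniform) coefficient part, and case-split on whether $W'$ appears once or twice among the query index sets (which is exactly the paper's decomposition of each $Q_i$ into the singly-occurring indices $I_i$ and the duplicated ones $J_i$). The paper makes the cancellation concrete by defining $f(s,r) \triangleq \frac{p_{\boldsymbol{s},\boldsymbol{r}}(s,r)}{\binom{M}{s}\binom{K-M-1}{r}\binom{K-1}{M}}P_{n,r}$ and reducing privacy to three explicit identities, equations (9)--(11), which it then asserts follow from the choice of $p_{\boldsymbol{s},\boldsymbol{r}}$, $\alpha_{n,r}$, $\beta_{s,r}$, $P_{n,r}$; your proposal recognizes that these weights are engineered for precisely this cancellation but stops at that recognition without writing down the identities, so it is a plan more than a proof. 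Two smaller points worth tightening: (a) the weight in your sum should be $p_{\boldsymbol{S}\mid\boldsymbol{W}}(S'\mid W') = 1/\binom{K-1}{M}$ rather than $p_{\boldsymbol{S}}(S')$ --- harmless here since the ratio is the same for every $W'$, but it is the quantity the paper actually uses; and (b) your rate argument only gives $H(A^{[\boldsymbol{W},\boldsymbol{S},\boldsymbol{C}]}) \le nL$ and then leans on the converse (Lemma~\ref{lem:Converse1}) to pin the rate to exactly $1/n$, whereas the paper shows directly that $A_1,\dots,A_n$ are linearly independent combinations, so the entropy is exactly $nL$ and the achievability proof is self-contained.
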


\begin{proof}
In the RP protocol (Step~3), the answer $A^{[W,S,C]}$ consists of $n$ pieces of information $A_1,\dots,A_{n}$, where each $A_i$ is a linear combination of $M+1$ messages in $X$. Since $X_1,\dots,X_K$ are uniformly and independently distributed over $\mathbb{F}_{q^m}$ and $A_1,\dots,A_n$ are linearly independent combinations of $X_1,\dots,X_K$ over $\mathbb{F}_q$, then $A_1,\dots,A_n$ are uniformly and independently distributed over $\mathbb{F}_{q^m}$. That is, $H(A_1)=\dots=H(A_n)=m\log q=L$, and $H(A^{[W,S,C]})=H(A_1)+\dots+H(A_n)=nL$. Since $H(A^{[W,S,C]})=nL$ for all $W,S,C$ such that $I^{[W,S]}=0$, then the average entropy of the answer over all $W,S,C$ such that $I^{[W,S]}=0$, i.e., $H(A^{[\boldsymbol{W},\boldsymbol{S},\boldsymbol{C}]})$, is equal to $nL$. Thus, the rate of the RP protocol is equal to $\frac{L}{nL}=\frac{1}{n}={\lceil \frac{K}{M+1}\rceil}^{-1}$. 

From Step~4 of the RP protocol, it should be obvious that the recoverability condition is satisfied. To prove that the RP protocol satisfies the privacy condition, we need to show that $\mathbb{P}(\boldsymbol{W}=W'|Q^{[W,S,C]},X_1\dots,X_K)=\frac{1}{K}$ for all $W'\in [K]$. Since the RP protocol does not depend on the contents of the messages $X_1,\dots,X_K$, then it is sufficient to prove that $\mathbb{P}(\boldsymbol{W}=W'|Q^{[W,S,C]})=\frac{1}{K}$ for all $W'\in [K]$. By the application of the total probability theorem and Bayes' rule, to show that the RP protocol satisfies the privacy condition, it suffices to show that $\mathbb{P}(Q^{[W,S,C]}|\boldsymbol{W}=W')$ is the same for all $W'\in [K]$. Since all possible collections $\{Q'_1,\dots,Q'_n\}$ are equiprobable, it suffices to show that $\mathbb{P}(Q|\boldsymbol{W}=W')$ is the same for all $W'\in [K]$, where $Q\triangleq\{Q_1,\dots,Q_n\}$. 

Each set $Q_i$ consists of two disjoint subsets $I_i$ and ${J_i = Q_i\setminus I_i}$ where $I_i$ is the set of all indices in $Q_i$ that belong to no other set $Q_j$. Note that $|I_1 \cup \cdots \cup I_n|=K-l$ and $|J_1 \cup \cdots \cup J_n|=l$. Consider an arbitrary $W'\in [K]$. There are two cases: (i) $W'\in J_i$ and $W'\in J_{j}$ for some $i,j\in [n]$, $i\neq j$, and (ii) $W'\in I_k$ for some $k\in [n]$. 

In the case (i), 
\begin{align*}
\mathbb{P}(Q|\boldsymbol{W}=W') &= \mathbb{P}(Q| \boldsymbol{W}=W' , \boldsymbol{S}=I_i \cup J_i \setminus \{W'\}) \\ & \quad \quad \times \mathbb{P}(\boldsymbol{S}=I_i \cup J_i\setminus \{W'\}|\boldsymbol{W}=W')\\ &\quad + \mathbb{P}(Q| \boldsymbol{W}=W' , \boldsymbol{S}=I_{j} \cup J_{j}\setminus \{W'\}) \\ & \quad\quad \times \mathbb{P}(\boldsymbol{S}=I_{j} \cup J_{j}\setminus \{W'\}|\boldsymbol{W}=W')\\ &= \frac{p_{\boldsymbol{s},\boldsymbol{r}}(s_i,r_i)}{\binom{M}{s_i}\binom{K-M-1}{r_i}\binom{K-1}{M}}P_{n,r_i}\\ &\quad +\frac{p_{\boldsymbol{s},\boldsymbol{r}}(s_{j},r_{j})}{\binom{M}{s_{j}}\binom{K-M-1}{r_{j}}\binom{K-1}{M}}P_{n,r_{j}}
\end{align*} where $s_i=|J_i|-1$, $r_i = l-|J_i|$, $s_{j}=|J_{j}|-1$, and ${r_{j}=l-|J_{j}|}$, and $p_{\boldsymbol{s},\boldsymbol{r}}(s,r)$ and $P_{n,r}$ are defined as in the protocol. Note that $\mathbb{P}(\boldsymbol{S}=I_i\cup J_i\setminus\{W'\}|\boldsymbol{W}=W')$ and $\mathbb{P}(\boldsymbol{S}=I_i\cup J_i\setminus\{W'\}|\boldsymbol{W}=W')$ are equal to $\frac{1}{\binom{K-1}{M}}$. (Note that ${0 \leq s_i,r_i,s_{j},r_{j} \leq l-1}$.) In the case (ii), 
\begin{align*}
\mathbb{P}(Q|\boldsymbol{W}=W') & = \mathbb{P}(Q|\boldsymbol{W}=W', \boldsymbol{S}=I_k \cup J_k\setminus \{W'\}) \\ &\quad\quad \times \mathbb{P}(\boldsymbol{S}=I_k \cup J_k\setminus \{W'\}|\boldsymbol{W}=W')\\ &= \frac{p_{\boldsymbol{s},\boldsymbol{r}}(s_k,r_k)}{\binom{M}{s_k}\binom{K-M-1}{r_k}\binom{K-1}{M}}P_{n,r_k}
\end{align*} where $s_k = |J_k|$, and $r_k = l-|J_k|$. (Note that ${0\leq s_k,r_k\leq l}$.) Define \[f(s,r) \triangleq \frac{p_{\boldsymbol{s},\boldsymbol{r}}(s,r)}{\binom{M}{s}\binom{K-M-1}{r}\binom{K-1}{M}}P_{n,r}, \quad 0\leq s,r\leq l.\]  Note that, in the case (i), $\mathbb{P}(Q|\boldsymbol{W}=W')=f(s_i,r_i)+f(s_{j},r_{j})$ for some $s_i,r_i,s_{j},r_{j}$ such that $s_i+r_i = l-1$ and $s_{j}+r_{j} = l-1$, and in the case (ii), $\mathbb{P}(Q|\boldsymbol{W}=W')=f(s_k,r_k)$ for some $s_k,r_k$ such that $s_k+r_k = l$. Thus, it should not be hard to see that the privacy condition is met so long as the following equations hold: 
\begin{equation}\label{eq:line6}
f(s,r)+ f(s',r') = f(s'',r'')+f(s''',r''') 
\end{equation} for all $0\leq s,r,s',r',s'',r'',s''',r'''\leq l-1$ such that $s+r=s'+r'=s''+r''=s'''+r'''=l-1$;
\begin{equation}\label{eq:line7}
f(s,r)+ f(s',r') = f(s'',r'')
\end{equation} for all $0\leq s,r,s',r'\leq l-1$ and all $0\leq s'',r''\leq l$ such that $s+r=s'+r'=l-1$ and $s''+r''=l$, and 
\begin{equation}\label{eq:line8}
f(s,r) = f(s',r')
\end{equation} for all $0\leq s,r,s',r'\leq l$ such that $s+r=s'+r'=l$. By a simple algebra, one can verify that for the choice of $p_{\boldsymbol{s},\boldsymbol{r}}(s,r)$ specified in the protocol, the equations~\eqref{eq:line6}-\eqref{eq:line8} are met, and the RP protocol satisfies the privacy condition. This completes the proof.
\end{proof}

\section{The PIR-CSI--II Problem}\label{sec:PIRCSI-II}

\subsection{Proof of Converse for Theorem~\ref{thm:PIRCSI-II}}

\begin{lemma}\label{lem:Converse2}
For $M=2$, $C_{\text{\it II}}\leq 1$; for $2<M\leq K-1$, $C_{\text{\it II}}\leq \frac{1}{2}$, and for $M=K$, $C_{\text{\it II}}\leq 1$.
\end{lemma}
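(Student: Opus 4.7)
The plan is to prove the three bounds separately, all by following the information-theoretic template of Lemma~\ref{lem:Converse1} adapted to Model~II, lower-bounding $H(A)$. For the easy cases $M = 2$ and $M = K$ (where I need only $H(A) \geq L$), I would start from $H(A) \geq H(A|Q, Y)$; recoverability gives $H(A|Q, Y) = H(A, X_W | Q, Y) = H(X_W | Q, Y) + H(A | Q, X_W, Y) \geq H(X_W | Q, Y)$. Unlike Model~I, here $X_W$ appears inside $Y$, but since $Y$ is a single linear combination of $M \geq 2$ messages with random nonzero coefficients and the remaining summands are i.i.d.\ uniform over $\mathbb{F}_{q^m}$, the conditional $X_W | Y$ remains uniform; thus $H(X_W | Q, Y) = L$, giving $H(A) \geq L$ and hence $C_{\text{\it II}} \leq 1$.

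For the main case $3 \leq M \leq K-1$, the target is $H(A) \geq 2L$. Building on the chain above, $H(A) \geq L + H(A | Q, X_W, Y)$, so it suffices to show $H(A | Q, X_W, Y) \geq L$. I would pick some $W_1 \neq W$ and invoke the privacy/recoverability consequence stated in the problem formulation to obtain $S_1 \ni W_1$ and $C_1$ such that $H(X_{W_1} | A, Q, Y^{[S_1, C_1]}) = 0$. With $Y_1 \triangleq Y^{[S_1, C_1]}$, conditioning plus the chain rule yields $H(A | Q, X_W, Y) \geq H(A | Q, X_W, Y, Y_1) = H(A, X_{W_1} | Q, X_W, Y, Y_1) \geq H(X_{W_1} | Q, X_W, Y, Y_1)$, reducing the task to verifying that this last term equals $L$.

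The main obstacle is precisely this uniformity of $X_{W_1}$: after substituting the known $X_W$ into $Y$, the two linear constraints $Y - c_W X_W = \sum_{i\in S\setminus\{W\}} c_i X_i$ and $Y_1 = \sum_{i\in S_1} c'_i X_i$ act on the variables $X_i$, $i \in (S\cup S_1)\setminus\{W\}$, and I need their restriction to the variables other than $X_{W_1}$ to have the same rank $2$ as the full system. A case analysis shows this holds automatically except in the pathological configuration where $S_1 = (S\setminus\{W\})\cup\{W_1\}$ and the coefficient vector $(c'_i)_{i\in S\setminus\{W\}}$ is proportional to $(c_i)_{i\in S\setminus\{W\}}$. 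My plan to rule out this pathology is to exploit the flexibility in choosing $W_1$: since $M \leq K-1$ provides at least one candidate $W_1 \in [K]\setminus S$ and $M \geq 3$ provides at least two candidates $W_1 \in S\setminus\{W\}$, I would argue that at least one such choice yields a non-pathological $(S_1, C_1)$, possibly by separately analyzing $W_1 \in S$ (where the analogous pathology requires a different structural coincidence on $S$ and $S_1$) and $W_1 \notin S$. The generic rank-2 argument on $|(S\cup S_1)\setminus\{W, W_1\}| \geq M - 1 \geq 2$ variables then yields $H(X_{W_1} | Q, X_W, Y, Y_1) = L$, whence $H(A) \geq 2L$ and $C_{\text{\it II}} \leq 1/2$.
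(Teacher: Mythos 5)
Your treatment of $M=2$ and $M=K$ matches the paper's, and your setup for $3\le M\le K-1$ (show $H(A|Q,X_W,Y)\ge L$, pick $W_1\ne W$, obtain a witness $Y_1$ from the privacy/recoverability consequence, reduce to $H(X_{W_1}\mid Q,X_W,Y,Y_1)=L$) also follows the paper's first step. You have also correctly identified the pathological configuration in which this last quantity drops to $0$: for $W_1\notin S$ it is $S_1=(S\setminus\{W\})\cup\{W_1\}$ with $(c'_i)_{i\in S\setminus\{W\}}$ proportional to $(c_i)_{i\in S\setminus\{W\}}$, and for $W_1\in S\setminus\{W\}$ it is the analogous coincidence with $S_1=S$.

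The gap is in how you propose to deal with the pathology. You plan to ``exploit the flexibility in choosing $W_1$'' and argue that \emph{some} $W_1$ admits a non-pathological witness $(S_1,C_1)$. But the witness is supplied by the protocol, not chosen by you: the privacy/recoverability condition only guarantees that \emph{some} $(S_1,C_1)$ with $W_1\in S_1$ makes $H(X_{W_1}\mid A,Q,Y_1)=0$, and an adversarial protocol can arrange that for every $W_1\ne W$ the only available witnesses are pathological ones. You offer no argument that rules this out, and in fact the paper's own proof is structured precisely as if it cannot be ruled out. The paper first tries $W'\in S\setminus\{W\}$; if that is pathological (case~(ii)), it then tries $W''\notin S$; and if that is also pathological, it does not look for yet another $W_1$ but instead \emph{changes the conditioning}: using the explicit linear forms forced by the two pathologies, $Y'=c'_W X_W+c'_{W'}X_{W'}+c'Z$ and $Y''=c''_{W''}X_{W''}+c''(c_{W'}X_{W'}+Z)$, it lower-bounds $H(A)\ge H(A\mid Q,Y',Y'')$ and shows $X_{W'}$ and $X_{W''}$ are jointly ``fresh'' given $(Q,Y',Y'')$, yielding $H(A)\ge 2L$ directly without ever conditioning on $Y$. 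That double-pathology argument is the actual crux of the converse for $3\le M\le K-1$, and it is exactly the step your proposal is missing. As written, your proof would be complete only if you could establish the existential claim about a non-pathological $W_1$, which you have not done and which the paper's treatment strongly suggests is false in general.
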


\begin{proof} 
Fix an arbitrary $W\in [K]$ and $Y=Y^{[S,C]}$ for arbitrary $S\in \mathcal{S}$ and $C\in \mathcal{C}$ such that $I^{[W,S]}=1$. Let $S = \{i_1,\dots,i_M\}$ and $C = \{c_{i_1},\dots,c_{i_M}\}$. Consider a query $Q=Q^{[W,S,C]}$ and an answer $A=A^{[W,S,C]}$. 

For the cases of $M=2$ and $M=K$, it suffices to show that $H(A)\geq L$. Note that $H(A)\geq H(A|Q,Y)=H(A,X_W|Q,Y)$, where the equality follows from the recoverability condition and the chain rule of entropy, and $H(A,X_W|Q,Y)= H(X_W|Q,Y)+H(A|Q,X_W,Y)\geq H(X_W)$, where the inequality follows from the independence of $X_W$ and $(Q,Y)$, and the non-negativity of entropy. Putting these arguments together, $H(A)\geq H(X_W)=L$.  

For the cases of $3\leq M \leq K-1$, we need to show that $H(A)\geq 2L$. By the argument above,
\begin{equation}\label{eq:line9}
H(A) \geq H(X_W)+H(A|Q,X_W,Y).
\end{equation} To lower bound $H(A|Q,X_W,Y)$, we arbitrarily choose a message, say $X_{W'}$, such that ${I^{[W',S]}=1}$. By the privacy and recoverability conditions, there exists $Y' = Y^{[S',C']}$ for some $S'\in \mathcal{S}$ and some $C'\in \mathcal{C}$ such that $I^{[W',S']} = 1$ and $H(X_{W'}|A,Q,Y') = 0$. Since conditioning does not increase the entropy, then $H(X_{W'}|A,Q,X_W,Y,Y') = 0$. Thus, 
\begin{align}
H(A|Q,X_W,Y) & \geq H(A|Q, X_W,Y,Y') \nonumber \\
&=H(A,X_{W'}|Q,X_W,Y,Y') \label{eq:line10}\\
&=H(X_{W'}|Q,X_W,Y,Y')\nonumber \\&\quad + H(A|Q,X_W,X_{W'},Y,Y'), \label{eq:line11}
\end{align} where~\eqref{eq:line10} holds because $H(A,X_{W'}|Q,X_W,Y,Y') = H(A|Q,X_W,Y,Y')+H(X_{W'}|A,Q,X_W,Y,Y')$, and $H(X_{W'}|A,Q,X_W,Y,Y')=0$ (by the assumption); and~\eqref{eq:line11} follows from the chain rule of entropy. 

Since $X_W$, $Y$, $X_{W'}$, and $Y'$ are linear functions, either (i) $H(X_{W'}|X_W,Y,Y')=H(X_{W'})$, i.e., $X_{W'}$ is independent of $(X_W,Y,Y')$, or (ii) ${H(X_{W'}|X_W,Y,Y')=0}$, i.e., $X_{W'}$ is recoverable from $(X_W,Y,Y')$. In the case (i), $H(X_{W'}|X_W,Y,Y')=H(X_{W'})$, and accordingly, $H(X_{W'}|Q,X_W,Y,Y')=H(X_{W'})$ since ${Q\rightarrow (X_W,Y,Y')\rightarrow X_{W'}}$ is a Markov chain. 

Rewriting~\eqref{eq:line11}, 
\begin{align}
H(A|Q,X_W,Y) &\geq H(X_{W'})\nonumber\\ &\quad +H(A|Q,X_W,X_{W'},Y,Y')\nonumber\\
&\geq H(X_{W'}).\label{eq:line12} 
\end{align} By~\eqref{eq:line9} and~\eqref{eq:line12}, $H(A)\geq H(X_W)+H(X_{W'})=2L$.

In the case (ii), $H(X_{W'}|X_W,Y,Y')=0$, and subsequently, $H(X_{W'}|Q,X_W,Y,Y')=0$. Again, by the linearity of $X_W$, $Y$, $X_{W'}$, and $Y'$, it follows that \[Y=c_W X_W+c_{W'} X_{W'}+Z\] and \[Y'=c'_W X_W + c'_{W'} X_{W'}+c' Z\] for some $c'_W,c'_{W'},c'\in \mathbb{F}^{\times}_q$, where $Z = \sum_{i\in S\setminus \{W,W'\}} c_{i} X_i$. 

To lower bound $H(A|Q,X_W,Y)$, we choose an arbitrary message, say $X_{W''}$, such that $I^{[W'',S]}=0$. Again, by the privacy and recoverability conditions, there exists ${Y'' = Y^{[S'',C'']}}$ for some $S''\in \mathcal{S}$ and some $C''\in \mathcal{C}$ such that $I^{[W'',S'']} = 1$ and $H(X_{W''}|A,Q,Y'') = 0$, and accordingly, $H(X_{W''}|A,Q,X_W,Y,Y'') = 0$. Similar to~\eqref{eq:line11}, 
\begin{align}
H(A|Q,X_W,Y) & \geq H(X_{W''}|Q,X_W,Y,Y'')\nonumber \\&\quad + H(A|Q,X_W,X_{W''},Y,Y''). \label{eq:line13}
\end{align} Similarly as in the case (i), if $X_{W''}$ is independent of $(Q,X_W,Y,Y'')$, then $H(A)\geq H(X_W)+H(X_{W''})=2L$. If $X_{W''}$ is recoverable from $(Q,X_W,Y,Y'')$, then \[{Y''=c''_{W''} X_{W''}+c'' (c_{W'} X_{W'}+Z)}\] for some $c''_{W''},c''\in \mathbb{F}^{\times}_q$. Note that $X_{W''}$ is independent of $(Q,X_{W'},Y',Y'')$ since $X_{W''}$ is not recoverable from $c'_{W}X_W+c'Z$ and $c''_{W''}X_{W''}+c''Z$, or equivalently, from $Y'$ and $Y''$ given $X_{W'}$. Also, $X_{W'}$ is independent of $(Q,Y',Y'')$ since $X_{W'}$ is not recoverable from $Y'$ and $Y''$. Thus,
\begin{align}
H(A) &\geq H(A|Q,Y',Y'')\nonumber \\ &=H(A,X_{W'},X_{W''}|Q,Y',Y'') \label{eq:line14}\\
&=H(X_{W'},X_{W''}|Q,Y',Y'')\nonumber\\
&\quad +H(A|Q,X_{W'},X_{W''},Y',Y'')\nonumber\\
&\geq H(X_{W'},X_{W''}|Q,Y',Y'')\nonumber\\
& \geq H(X_{W'}|Q,Y',Y'')\nonumber\\
&\quad +H(X_{W''}|Q,X_{W'},Y',Y'')\nonumber\\
& = H(X_{W'})+H(X_{W''})\label{eq:line15}
\end{align} where~\eqref{eq:line14} holds because $H(X_{W'}|A,Q,Y')=0$ and $H(X_{W''}|A,Q,Y'')=0$ (by the recoverability condition), and accordingly, $H(X_{W'}|A,Q,Y',Y'')=0$ and $H(X_{W''}|A,Q,X_{W'},Y',Y'')=0$;~\eqref{eq:line15} holds since $X_{W'}$ and $X_{W''}$ are independent of $(Q,Y',Y'')$ and $(Q,X_{W'},Y',Y'')$, respectively. Thus, $H(A)\geq H(X_{W'})+H(X_{W''})=2L$.
\end{proof}

\subsection{Proof of Achievability for Theorem \ref{thm:PIRCSI-II}}\label{subsec:AchThm2}
In this section, we propose a PIR-CSI--II protocol for each of the cases of $M=2$ (case 1), $3\leq M\leq \frac{K}{2}$ (case 2), $\frac{K}{2}+1\leq M\leq K-1$ (case 3), and $M=K$ (case 4). 

Assume, w.l.o.g., that $W=\{1\}$, ${S = \{1,\dots,{M}\}}$, and ${C = \{c_{1},\dots,c_{M}\}}$. 

\textit{Proposed Protocols for Cases 1-4:} The proposed protocol for each case consists of four steps, where the steps 2-4 are the same as the steps 2-4 in the RP protocol (Section~\ref{subsec:AchThm1}), and the step~1 of the proposed protocols are as follows: 

\textit{Case 1:} The user randomly selects the demand index $W$, (i.e., $1$) with probability $\frac{1}{K}$, or the other index in $S$ (i.e., $2$) with probability $\frac{K-1}{K}$, and it creates two sets $Q_1 = \{i\}$ and $Q'_1=\{c\}$, where $i$ is the selected index by the user, and $c$ is chosen uniformly at random from $\mathbb{F}^{\times}_q$.

\textit{Case 2:} The user creates two (ordered) sets $Q_1,Q_2$ of indices in $[K]$, each of size $M-1$, and two (ordered) sets $Q'_1,Q'_2$ of elements in $\mathbb{F}^{\times}_q$, each of size $M-1$. 

For constructing $Q_1$ and $Q_2$, the user first chooses an integer $r$ randomly according to a pmf $p_{\boldsymbol{r}}(r)$ given by 
\begin{equation*}
p_{\boldsymbol{r}}(r) = 
\begin{cases}
\frac{2(M-1)}{K}, & r = M-2,\\
1-\frac{2(M-1)}{K}, & r = M-1.
\end{cases}
\end{equation*} If $r = M-1$, the user randomly selects $r$ indices from $R\triangleq [K]\setminus S$. If $r=M-2$, the user selects the index $W$, and $r$ randomly chosen indices from $R$. Denote by $U$ the set of $M-1$ selected indices from $W$ and $R$. Then, the user creates the sets $Q_1 = \{2,\dots,M\}$ and $Q_2=U$.

For constructing $Q'_1$ and $Q'_2$, the user creates the set ${Q'_1=\{c_{2},\dots,c_{M}\}}$, and creates the set $Q'_2$ by randomly choosing $M-1$ elements from $\mathbb{F}^{\times}_q$. 

\textit{Case 3:} The user creates two (ordered) sets $Q_1,Q_2$ of indices in $[K]$, each of size $M$, and two (ordered) sets $Q'_1,Q'_2$ of elements in $\mathbb{F}^{\times}_q$, each of size $M$. 

For constructing $Q_1$ and $Q_2$, the user begins with choosing an integer $s$ at random according to a pmf $p_{\boldsymbol{s}}(s)$ given by 
\begin{equation*}
p_{\boldsymbol{s}}(s) = 
\begin{cases}
1-\frac{2(K-M)}{K}, & s = 2M-K-1,\\
\frac{2(K-M)}{K}, & s = 2M-K.
\end{cases}
\end{equation*} If $s = 2M-K$, the user randomly selects $s$ indices from $S\setminus W$. If $s=2M-K-1$, the user selects the index $W$ together with $s$ randomly chosen indices from $S\setminus W$. Denote by $U$ the set of $2M-K$ selected indices from $S$. Then, the user creates the sets $Q_1 = S$ and $Q_2=U\cup ([K]\setminus S)$.

The user also creates the set ${Q'_1=\{c,c_{2},\dots,c_{M}\}}$ where $c$ is randomly chosen from ${\mathbb{F}^{\times}_q\setminus \{c_{1}\}}$, and creates the set $Q'_2$ by randomly choosing $M$ elements from $\mathbb{F}^{\times}_q$. 

\textit{Case 4:} The user creates two sets $Q_1 = [K]$ and $Q'_1=\{c,c_{2},\dots,c_{K}\}$ where $c$ is randomly chosen from $\mathbb{F}^{\times}_q\setminus \{c_{1}\}$. 

\begin{lemma}\label{lem:Ach2}
The proposed protocols for $M=2$, $3\leq M\leq \frac{K}{2}$, $\frac{K}{2}+1\leq M\leq K-1$, and $M=K$ are PIR-CSI--II protocols, and achieve the rates $1$, $\frac{1}{2}$, $\frac{1}{2}$, and $1$, respectively.
\end{lemma}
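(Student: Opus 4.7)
The plan is to verify, for each of the four cases, the three properties required of a PIR-CSI--II protocol: recoverability, the claimed rate, and privacy, following the template established in the proof of Lemma~\ref{lem:Ach1}.

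For recoverability, I would exhibit the algebra by which the user obtains $X_W$ from the answer and the side information. In case~1 the answer is $A_1 = c X_i$ for $i \in S = \{1,2\}$; combined with $Y = c_1 X_1 + c_2 X_2$ this yields $X_W$ by one field inversion and subtraction. In case~2, the first answer $A_1 = \sum_{i=2}^{M} c_i X_i$ satisfies $Y - A_1 = c_1 X_W$, so $X_W = c_1^{-1}(Y - A_1)$. In cases~3 and~4, $A_1 = c X_W + \sum_{i \in S \setminus \{W\}} c_i X_i = (c - c_1) X_W + Y$; since $c$ is drawn from $\mathbb{F}^{\times}_q \setminus \{c_1\}$, the difference $c - c_1$ is nonzero and $X_W = (c - c_1)^{-1}(A_1 - Y)$.

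For the rate, the argument is exactly that of Lemma~\ref{lem:Ach1}: the answer consists of $n$ linear combinations that are linearly independent over $\mathbb{F}_q$ ($n = 1$ in cases~1 and~4, $n = 2$ in cases~2 and~3), so because $X_1, \ldots, X_K$ are i.i.d.\ uniform on $\mathbb{F}_{q^m}$ these combinations are themselves i.i.d.\ uniform on $\mathbb{F}_{q^m}$, giving $H(A^{[W,S,C]}) = n L$ for every admissible $(W,S,C)$ and hence rate $1/n$, matching $1, \tfrac{1}{2}, \tfrac{1}{2}, 1$ in the four cases.

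For privacy, as in Lemma~\ref{lem:Ach1} it suffices to show that $\mathbb{P}(Q \mid \boldsymbol{W} = W')$ does not depend on $W' \in [K]$, where $Q$ records only the unordered collection of index sets; the coefficient sets are either uniform over $\mathbb{F}^{\times}_q$ or contain the user's original coefficients together with a fresh $c$ uniform on $\mathbb{F}^{\times}_q \setminus \{c_1\}$, and by symmetry these contribute the same likelihood for every candidate $W'$. For each observed $Q$ and each hypothesized $W'$, I would enumerate the $(W', S')$-pairs consistent with $Q$ and sum the corresponding conditional probabilities, which are determined by the protocol pmfs. The main technical obstacle lies in cases~2 and~3, where the server observes two sets $A, B$ and $W'$ may lie in $A$ only, $B$ only, both (only possible in case~3, where $|A \cap B| = 2M-K$), or neither (only possible in case~2, since in case~3 the union $A \cup B$ equals $[K]$). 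Each placement contributes a distinct summand coming from a distinct value of $r$ (respectively $s$). Equating these contributions across all $W'$ reduces to the single algebraic identity $p_{\boldsymbol{r}}(M-2)/p_{\boldsymbol{r}}(M-1) = 2(M-1)/(K-2M+2)$ in case~2 and $p_{\boldsymbol{s}}(2M-K)/p_{\boldsymbol{s}}(2M-K-1) = 2(K-M)/(2M-K)$ in case~3, both of which are verified directly from the definitions of $p_{\boldsymbol{r}}$ and $p_{\boldsymbol{s}}$. Case~1 reduces to a short application of Bayes' rule (using that the user selects its own demand index with probability $1/K$ and the other element of $S$ with probability $(K-1)/K$), while case~4 is immediate because the query is a single deterministic set with a uniformly random coefficient, giving the same distribution under any $W' \in [K]$.
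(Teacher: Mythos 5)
Your proposal is correct and follows essentially the same approach as the paper: verify recoverability by direct algebra, inherit the rate argument from Lemma~\ref{lem:Ach1} via linear independence of the answer pieces, and establish privacy by showing $\mathbb{P}(Q\mid\boldsymbol{W}=W')$ is constant through a case analysis on whether $W'$ lies in one, both, or neither of the index sets. The only cosmetic difference is that you compress each privacy computation to the single ratio identity $p_{\boldsymbol{r}}(M-2)/p_{\boldsymbol{r}}(M-1)=2(M-1)/(K-2M+2)$ (resp.\ $p_{\boldsymbol{s}}(2M-K)/p_{\boldsymbol{s}}(2M-K-1)=2(K-M)/(2M-K)$), whereas the paper writes out both sides of the equality $\mathbb{P}(Q\mid\boldsymbol{W}=W')$ explicitly and observes they agree; the two are equivalent.
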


\begin{proof}
The proof of the achievability rate of the proposed protocols follows the exact same line as in the proof of the achievability rate of the RP protocol in Lemma~\ref{lem:Ach1}. Moreover, by the structure of the proposed protocols, it should not be hard to see that the recoverability condition is met. Similarly as in the proof of Lemma~\ref{lem:Ach1}, to prove the privacy of the proposed protocols, it suffices to show that $\mathbb{P}(Q|\boldsymbol{W}=W')$ is the same for all $W'\in [K]$, where $Q\triangleq Q_1$ for the cases~1 and~4, and $Q\triangleq \{Q_1,Q_2\}$ for the cases~2 and~3. (Note that all possible sets $Q'_1$ are equiprobable, and all possible collections $\{Q'_1,Q'_2\}$ are equiprobable.) The proof of privacy of the proposed protocol for each case is as follows. 

\subsubsection*{Case 1} Consider an arbitrary $Q=Q_1$. Take an arbitrary $W'\in [K]$. There are two cases: (i) $Q=\{W'\}$, and (ii) $Q=\{W''\}$ for some $W''\neq W'$. 

In the case (i),
\begin{align}
\mathbb{P}(Q|\boldsymbol{W}=W')&=\hspace{-0.1cm}\sum_{W''\neq W'}\mathbb{P}(Q|\boldsymbol{W}=W', \boldsymbol{S}=\{W',W''\})\nonumber\\
&\hspace{1.7cm}\times \mathbb{P}(\boldsymbol{S}=\{W',W''\}|\boldsymbol{W}=W')\nonumber\\
& = \frac{1}{K}	\label{eq:line16}
\end{align} since 
\begin{align*}
&\sum_{W''\neq W'}\mathbb{P}(Q|\boldsymbol{W}=W', \boldsymbol{S}=\{W',W''\})\\
&\quad={(K-1)\mathbb{P}(Q|\boldsymbol{W}=W', \boldsymbol{S}=\{W',W''\})},
\end{align*} 
\[\mathbb{P}(Q|\boldsymbol{W}=W', \boldsymbol{S}=\{W',W''\})=\frac{1}{K},\] and
\[\mathbb{P}(\boldsymbol{S}=\{W',W''\}|\boldsymbol{W}=W')=\frac{1}{K-1}.\] 

In the case (ii),
\begin{align}
\mathbb{P}(Q|\boldsymbol{W}=W')&=\mathbb{P}(Q|\boldsymbol{W}=W', \boldsymbol{S}=\{W',W''\})\nonumber\\
&\quad\quad\times \mathbb{P}(\boldsymbol{S}=\{W',W''\}|\boldsymbol{W}=W')\nonumber\\
& = \frac{1}{K}	\label{eq:line17}
\end{align} since 
\[\mathbb{P}(Q|\boldsymbol{W}=W', \boldsymbol{S}=\{W',W''\})=\frac{K-1}{K},\] and \[\mathbb{P}(\boldsymbol{S}=\{W',W''\}|\boldsymbol{W}=W')=\frac{1}{K-1}.\] By~\eqref{eq:line16} and~\eqref{eq:line17}, ${\mathbb{P}(Q|\boldsymbol{W}=W')}$ is the same for all ${W'\in [K]}$.\\ 

\subsubsection*{Case 2} Consider an arbitrary $Q=\{Q_1,Q_2\}$. Take an arbitrary $W'\in [K]$. There are two cases: (i) $W'\in Q_1\cup Q_2$, and (ii) $W'\not\in Q_1\cup Q_2$. 

In the case (i), w.l.o.g., assume that $W'\in Q_1$. Thus, 
\begin{align}
\mathbb{P}(Q|\boldsymbol{W}=W')&=\mathbb{P}(Q|\boldsymbol{W}=W', \boldsymbol{S}=W'\cup Q_2)\nonumber\\
&\quad\times \mathbb{P}(\boldsymbol{S}=W'\cup Q_2|\boldsymbol{W}=W')\nonumber\\
& = \frac{2(M-1)}{K\binom{K-M}{M-2}\binom{K-1}{M-1}}\label{eq:line18}
\end{align} since 
\[\mathbb{P}(Q|\boldsymbol{W}=W',\boldsymbol{S}=W'\cup Q_2)=\frac{2(M-1)}{K\binom{K-M}{M-2}},\] and 
\[\mathbb{P}(\boldsymbol{S}=W'\cup Q_2|\boldsymbol{W}=W')=\frac{1}{\binom{K-1}{M-1}}.\] 

In the case (ii), 
\begin{align}
\mathbb{P}(Q|\boldsymbol{W}=W')&=\mathbb{P}(Q|\boldsymbol{W}=W', \boldsymbol{S}=W'\cup Q_1)\nonumber\\
&\quad\quad\times \mathbb{P}(\boldsymbol{S}=W'\cup Q_1|\boldsymbol{W}=W')\nonumber\\
&\quad+\mathbb{P}(Q|\boldsymbol{W}=W', \boldsymbol{S}=W'\cup Q_2)\nonumber\\
&\quad\quad\times \mathbb{P}(\boldsymbol{S}=W'\cup Q_2|\boldsymbol{W}=W')\nonumber\\
& = \frac{2(K-2(M-1))}{K\binom{K-M}{M-1}\binom{K-1}{M-1}}\label{eq:line19}
\end{align} since 
\begin{align*}
&\mathbb{P}(Q|\boldsymbol{W}=W',\boldsymbol{S}=W'\cup Q_1)\\ 
&\quad =\mathbb{P}(Q|\boldsymbol{W}=W',\boldsymbol{S}=W'\cup Q_2)\\ &\quad =\frac{K-2(M-1)}{K\binom{K-M}{M-1}},	
\end{align*} and
\begin{align*}
& \mathbb{P}(\boldsymbol{S}=W'\cup Q_1|\boldsymbol{W}=W')\\ 
& \quad =\mathbb{P}(\boldsymbol{S}=W'\cup Q_2|\boldsymbol{W}=W')\\
& \quad =\frac{1}{\binom{K-1}{M-1}}.	
\end{align*} Since~\eqref{eq:line18} and~\eqref{eq:line19} are equal, then $\mathbb{P}(Q|\boldsymbol{W}=W')$ is the same for all $W'\in [K]$.\\ 
 
\subsubsection*{Case 3} Consider an arbitrary $Q=\{Q_1,Q_2\}$. Take an arbitrary $W'\in [K]$. There are two cases: (i), $W'\in Q_1\cap Q_2$, and (ii) $W'\not\in Q_1\cap Q_2$. 

In the case (i), 
\begin{align}
\mathbb{P}(Q|\boldsymbol{W}=W')&=\mathbb{P}(Q|\boldsymbol{W}=W', \boldsymbol{S}= Q_1)\nonumber\\
&\quad\quad\times \mathbb{P}(\boldsymbol{S}=Q_1|\boldsymbol{W}=W')\nonumber\\
&\quad+\mathbb{P}(Q|\boldsymbol{W}=W', \boldsymbol{S}=Q_2)\nonumber\\
&\quad\quad\times \mathbb{P}(\boldsymbol{S}=Q_2|\boldsymbol{W}=W')\nonumber\\
& = \frac{2(2M-K)}{K\binom{M-1}{2M-K-1}\binom{K-1}{M-1}}\label{eq:line20}
\end{align} since 
\begin{align*}
&\mathbb{P}(Q|\boldsymbol{W}=W',\boldsymbol{S}=Q_1)\\ 
&\quad =\mathbb{P}(Q|\boldsymbol{W}=W',\boldsymbol{S}=Q_2)\\ &\quad =\frac{2M-K}{K\binom{M-1}{2M-K-1}},	
\end{align*} and
\begin{align*}
& \mathbb{P}(\boldsymbol{S}=Q_1|\boldsymbol{W}=W')\\ 
& \quad =\mathbb{P}(\boldsymbol{S}=Q_2|\boldsymbol{W}=W')\\
& \quad =\frac{1}{\binom{K-1}{M-1}}.	
\end{align*} 

In the case (ii), w.l.o.g., assume that $W'\in Q_1$. Thus, 
\begin{align}
\mathbb{P}(Q|\boldsymbol{W}=W')&=\mathbb{P}(Q|\boldsymbol{W}=W', \boldsymbol{S}=Q_1)\nonumber\\
&\quad\times \mathbb{P}(\boldsymbol{S}=Q_1|\boldsymbol{W}=W')\nonumber\\
& = \frac{2(K-M)}{K\binom{M-1}{2M-K}\binom{K-1}{M-1}}\label{eq:line21}
\end{align} since 
\[\mathbb{P}(Q|\boldsymbol{W}=W',\boldsymbol{S}=Q_1)=\frac{2(K-M)}{K\binom{M-1}{2M-K}},\] and 
\[\mathbb{P}(\boldsymbol{S}=Q_1|\boldsymbol{W}=W')=\frac{1}{\binom{K-1}{M-1}}.\] Again,~\eqref{eq:line20} and~\eqref{eq:line21} are equal, and $\mathbb{P}(Q|\boldsymbol{W}=W')$ is the same for all $W'\in [K]$.\\ 

\subsubsection*{Case 4} Since $Q=Q_1=[K]$, then $\mathbb{P}(Q|\boldsymbol{W}=W')=1$ for all $W'\in [K]$. 

This completes the proof.
\end{proof}

\bibliographystyle{IEEEtran}
\bibliography{PIR_salim,pir_bib,coding1,coding2}

\end{document}